\numberwithin{equation}{section}
 \numberwithin{table}{section}
\renewcommand\section{\@startsection {section}{1} {0pt}{20pt}{10pt} {\normalfont\large\bfseries}}
\renewcommand\subsection{\@startsection {subsection}{1} {0pt}{20pt}{10pt} {\normalfont\large\bfseries}}
\renewenvironment{proof}{\rm \trivlist  \item[\hskip \labelsep{\sc Proof.}]}{\squareforqed \endtrivlist}
\newcommand\squareforqed{\hbox{\rlap{$\sqcap$}$\sqcup$}}
\theoremstyle{remark}
\newtheorem{theorem}{Theorem}
\newtheorem{lemma}{Lemma}
\newtheorem{definition}{Definition}
\theoremstyle{remark}
\newtheorem{example}{Example}
\newcommand{\HRule}{\rule{\linewidth}{0.1mm}}
\newcommand{\design}{{\mathcal D}}
\newcommand{\fraction}{\mathcal F}
\newcommand{\est}{\textrm{Est}}
\def\cocoa{{\hbox{\rm C\kern-.13em o\kern-.07em C\kern-.13em o\kern-.15em A}}}
\begin{document}
\thispagestyle{empty}

\begin{center}
{\Large {Two polynomial representations of experimental design}

\vskip 1 cm
 {\large Roberto Notari,} {\small \emph{Dipartimento di  Matematica, Politecnico di Torino \\ Turin, Italy
(\underline{roberto.notari@polito.it})}}

\bigskip
 {\large Eva Riccomagno,} {\small \emph{Dipartimento di  Matematica, Universit\`a di Genova \\  Genoa, Italy
(\underline{riccomagno@dima.unige.it})}}

\bigskip
 {\large Maria-Piera Rogantin,} {\small \emph{Dipartimento di  Matematica, Universit\`a di Genova \\ Genoa, Italy
(\underline{rogantin@dima.unige.it})}}
}
\end{center}

\noindent\HRule

{\footnotesize
 \noindent \textbf{Abstract}

{\medskip \noindent  In the context of algebraic statistics an experimental design is described by a set of polynomials
called the design ideal. This, in turn, is generated by finite sets of polynomials. Two types of generating sets are
mostly used in the literature: Gr\"obner bases and indicator functions. We briefly describe them both, how they are
used in the analysis and planning of a design and how to switch between them. Examples include fractions of full
factorial designs and designs for mixture experiments.}

\bigskip
 \noindent \textbf{\emph{AMS Subject Classification:}}{ 62K15, 13P10}

\bigskip
\noindent \textbf{\emph{Key words:}}{ Algebraic Statistics, Factorial design, Gr\"obner basis, Indicator
function, Mixture design. }}

{\small
\section{Introduction}

In the algebraic statistics literature two types of polynomial representations of an experimental design are studied:
the Gr\"obner type (see \cite{pistone|wynn:96}, \cite{pistone|riccomagno|wynn:2001}) and the indicator function type
(see \cite{fontana|pistone|rogantin:2000},  \cite{ye:2003}, \cite{pistone|rogantin:2007-b}). In this paper we compare
them, describe how to derive them from the design points and how to use them in the analysis of the design properties
by unifying and completing results from the literature.  Mainly we provide an original and efficient algorithm to
switch between the two representations. The diagram below summarizes the paper.
\begin{equation*}  \begin{array}{ccc}
    \text{Points coordinates} & & \text{Generating set} \\
    \downarrow         & \searrow \hskip-11pt \swarrow & \downarrow \\
    \text{Indicator function} & {\rightleftarrows} & \text{Gr\"obner representation} \\
  \end{array}
  \end{equation*}
 In Section \ref{sec:2repr} the two representations are described and their
relative practical advantages are discussed. Algebraic algorithms to move along the four down-arrows of the diagram are
discussed. In Section \ref{sec:changing} a theorem and an algorithm to change representation are given which do not
require the knowledge of the coordinates of the points. This is represented by the horizontal arrows in the diagram.

The horizontal arrows are particularly important in the planning stage of the experiment.  This is because designs with
a given confounding structure can be easily defined through generating sets and actual point coordinates are unknown
until the corresponding system of equations is solved. The actual number of points in the design can be computed from
the design ideal using the Hilbert function, as we do in Section~\ref{sec:algorithm}. Analogue theorem and algorithm
for mixture experiments are presented in Section \ref{sec:algorithm}. An implementation of the algorithms in the
general-purpose mathematics software package Maple is provided in the Appendix. In Section \ref{sec:ex} a large design
for a screening experiment from the chemical literature is studied.

We use the dedicated symbolic softwares {\cocoa}, see \cite{CocoaSystem}, and the general purpose software Maple,
see \cite{maple:91}. The provided algorithms can be easily implemented in other softwares.

\begin{example} \label{running_example}
To illustrate the main points of our discussion we use the two simple designs, $\fraction_{A}$ and $\fraction_{P}$
below
\begin{eqnarray*}
 \fraction_A & = & \left\{ (1,0),(-1,0), (0,1), (0,-1)\right\} \\
 \fraction_P & = & \left\{  (1,0,0), (0,1,0), (0,0,1), (1/3,1/3,1/3),  (1/2,1/2,0), (1/2,0,1/2), (0,1/2,1/2)\right\}
\end{eqnarray*}
 Note that the components of each point of $\fraction_{P}$ sum to one.
\hfill $\square$
\end{example}

\section{The two representations}\label{sec:2repr}

We begin with some unavoidable algebraic notions. Relevant references to  polynomial algebra can be found in the
textbooks by \cite{cox|little|oshea:97,cox|little|oshea:2005} and \cite{kreuzer|robbiano:2000,kreuzer|robbiano:2005}.

Let $k$  be a computable numerical field, and $ k^m $ be the affine $m-$dimensional space. We consider a design with $m$ factors,
where the levels of each factor are coded with integer, rational, real or complex numbers. In practical situations $k$
is the set of the rational numbers $\mathbb Q$. For the indicator function representation, we need an extension of
$\mathbb Q$ to include the imaginary unit and some irrational real numbers. This is a computable set. Then, a design
$\fraction$ is a finite set of $n$ distinct points in $ k^m$. Let $R=k[x_1,\ldots,x_m]$ be the polynomial ring in $m $
indeterminates with coefficients in $k$. The indeterminates in $R$ correspond to the design factors.

 Three notions from algebraic geometry/commutative algebra are corner stones.

\begin{itemize}
\item \emph{Ideal of a design.} The \emph{design ideal} of $ \fraction$ is
\begin{equation*}
I(\fraction) = \{ f \in R \vert f(\zeta) = 0 \mbox{ for all } \zeta \in \fraction \}.
\end{equation*}
$ I(\fraction) $ is an ideal, i.e. $ f+g\in I(\fraction) $
for all $ f,g \in I(\fraction) $ and $fg\in I(\fraction) $ for every $f\in I(\fraction) $ and $ g\in R.$ The Hilbert
Basis theorem states that every polynomial ideal is finitely generated. Thus, there exist $f_1,\dots,f_r \in
I(\fraction) $ such that
\begin{equation*}
f \in I(\fraction) \mbox{ if, and ony if, } f = \sum_{i=1}^r s_i f_i \mbox{ for some } s_i\in R.
\end{equation*}
The set of generators $f_1,\ldots,f_r$ is not unique. The ideal generated by $f_1,\ldots,f_r$ is indicated by $\langle
f_1,\ldots,f_r \rangle$. Conversely, given an ideal $ I,$ the set of zeros of $I$ is, by definition, an algebraic variety
and corresponds to the zero-set of any of its generator sets.

\item \emph{Interpolation.} For any $ k-$valued function, $F$, defined on a design $\fraction$,  there exist (interpolating)
polynomials $f\in R$ such that $f(\zeta)=F(\zeta)$ for all $\zeta\in \fraction$.

\item \emph{Quotient space.} A standard algebraic construction is the quotient ring $ R/J $ for any ideal $ J \subseteq
R.$ The relation $ \sim $ defined as $ \{ f \sim g \mbox{ if, and only if, } f - g \in J \} $ is an equivalence
relation. The elements of $ R/J $ are the equivalence classes of $ \sim.$ $ R/J $ inherits a ring structure from $ R $
by defining sum and product of classes as $ [f] + [g] = [f+g], [f] [g] = [f g].$

If $f, \, g \in R $ interpolate the same function $ F $ defined on the design $ \fraction,$ that is to say, $ f$ and
$g$ are aliased on $ \fraction $, then $ f - g $ is zero if evaluated on each $ \zeta \in \fraction,$ and so $ f - g
\in I(\fraction).$ Hence, there exists a unique class in $ R/I(\fraction) $ that contains all the polynomials
interpolating the same function $F$.
\end{itemize}

In algebraic geometry a design $\fraction$ is seen as a zero-dimensional variety. The focus both in algebraic
statistics and in this paper switches from the design $\fraction$ to its ideal $ I(\fraction)$. As we shall see below,
the Gr\"obner representation and the indicator function representation of $\fraction$ are nothing else than two sets of
generators of $I(\fraction)$. Replicated points can be considered but some technical issues, which are briefly
illustrated in Example~\ref{examplereplicates}, occur which are outside the scope of this paper.

\begin{example}\label{examplereplicates}
In $\mathbb Q[x_1,x_2]$ consider the two ideals $I_1$ and $I_2$ defined as $I_1=\langle x_1, x_2^2 \rangle$ and
$I_2=\langle   x_1 + x_2,x_2^2 \rangle$. The zero sets of $I_1$ and $I_2$ are equal and consist of the point $(0,0)$
with multiplicity two, as can be easily checked by solving the two systems of equations $x_1=x_2^2=0$ and $x_1 + x_2 =
x_2^2 = 0$. But the two ideals are not equal because the polynomial $x_1$ is in $I_1$ but not in $I_2$ and conversely $
x_1 + x_2 \in I_2 $ but not in $ I_1$.

In general, questions like equality of ideals ($I_1=I_2$), membership of a polynomial to an ideal ($x_1\in I_1$ and
$x_1\not\in I_2$), intersection and sum of ideals can be handled by using computer algebra softwares.

\hfill $\square$
\end{example}

\subsection{Indicator function} \label{sec:if}
To define the indicator function of $\fraction$ we must consider $\fraction$ as a subset of a larger design $\mathcal
D\subset k^m$. Usually, but not necessarily, $\mathcal D$ has the structure of a full factorial design.
 The indicator function $F$ of $\fraction\subset \mathcal D$ is the response function
\begin{equation}\label{eq:F}
F(\zeta)=\left\{ \begin{array}{ll}
1 & \text{if } \zeta \in \fraction \\
0 & \text{if } \zeta \in \mathcal D\setminus \fraction .
\end{array} \right.
\end{equation}

The polynomial indicator function for two level fractional factorial designs were introduced in
\cite{fontana|pistone|rogantin:97} and \cite{fontana|pistone|rogantin:2000} and independently in \cite{tang|deng:99} with
a slightly different presentation. An extension to two-level designs with replicates is in \cite{ye:2003}
and to multilevel factors, using orthogonal polynomials with integer coding, in \cite{cheng|ye:2004}.

The case of  factorial designs is treated in \cite{pistone|rogantin:2007-b}, where the $n_j$ levels of each factor are
coded by the $n_j$-th roots of the unity, $j=1,\dots,m$. With this coding an orthonormal base of the response space on
the design is formed by the set of all the monomial terms:
 \begin{equation*}
 \left\{ x^\alpha  ,  \  \alpha \in L \right\} \qquad \textrm{ and } \
 L = \left\{   \alpha=(\alpha_1,\ldots,\alpha_m) \  , \ \alpha_j = 0,\ldots,n_j-1 \  \textrm{ and } \  j=1,\ldots,m\right\} \ .
 \end{equation*}
The indicator function $F$ is a real valued polynomial with complex coefficients: $\sum_{\alpha \in L} b_\alpha \
X^\alpha(\zeta)$, $ \zeta\in\design$.
 In this case, the coefficients are related to many
interesting properties of the fraction in a simple way: orthogonality among the factors and interactions, projectivity,
aberration and regularity. For instance, the fraction is regular if and only if all the coefficients are equal to the
ratio between the number of fraction points and the number of the full design points; the level of a simple factor of an
interaction occurs equally often in the fraction if and only if the coefficient of the corresponding term is zero; two
simple factors or interactions are orthogonal if and only if the coefficient of the term with exponent the sum of the
two exponents is zero; a fraction is an orthogonal array of strength $s$ if and only if the coefficients of the terms
of order lower than $s$ are zero.

\begin{example}
The fraction of a $3^4$ full factorial design $
 \fraction_R = \{(1,1,1,1), (1,\omega_1,\omega_1,\omega_1),(1,\omega_2,\omega_2,\omega_2),$ \\$
 (\omega_1,1,\omega_1,\omega_2),(\omega_1,\omega_1,\omega_2,1),
 (\omega_1,\omega_2,1,\omega_1), (\omega_2,1,\omega_2,\omega_1),(\omega_2,\omega_1,1,\omega_2),
 (\omega_2,\omega_2,\omega_1,1) \}
$, where $1,\omega_1,\omega_2$ are the cubic roots of the unity, is a regular fraction; in fact, its indicator function
is
 \begin{equation*} F =\frac 1 9\left( 1 + x_2 x_3 x_4 + x_2^2 x_3^2 x_4^2 + x_1 x_2 x_3^2  + x_1^2 x_2^2 x_3  + x_1
x_2^2 x_4 + x_1^2 x_2 x_4^2 + x_1 x_3 x_4^2 + x_1^2 x_3^2 x_4 \right).
 \end{equation*}
The fraction points are $1/9$ of the $3^4$ design; in fact the constant term is $1/9$. Moreover, each factor is
orthogonal to the constant term as shown by the fact that the coefficients of the terms of order 1 are 0. Any two
factors are mutually orthogonal; in fact the coefficients of the terms of order 2 are 0. The interaction terms
appearing in the indicator function are the ``defining words'' of the regular fraction.
 \end{example}
\begin{example}
The fraction of a $2^5$ full factorial design $ \fraction_O= \{(1, 1, 1, 1, 1),(  1, 1, 1,-1, 1),(  1, 1,-1,-1, 1)$,
\\$( 1,-1,-1, 1, 1), ( -1, 1,-1, 1, 1), ( -1,-1, 1, 1, 1), ( -1,-1, 1,-1, 1),( -1,-1,-1,-1, 1),( 1,1,-1,-1,-1),\\(
1,-1, 1, 1,-1),(  1,-1, 1,-1,-1), (  1,-1,-1, 1,-1),( -1, 1, 1, 1,-1),( -1, 1, 1,-1,-1),( -1, 1,-1, 1,-1),\\(
-1,-1,-1,-1,-1)\}$ is an orthogonal array of strength 2; in fact, its indicator function \begin{equation*} F = \frac 1
2  - \frac 1 4x_1x_2x_4 + \frac 1 4x_1x_2x_5 + \frac 1 4x_1x_2x_3x_4 + \frac 1 4x_1x_2x_3x_5 \end{equation*}  contains
only terms of order greater than 2, together with the constant term. \hfill $\square$
\end{example}

When the coordinates of the points in $\fraction$ and $\mathcal D$ are known, the indicator function $F$ can be
computed using some form of interpolation formula for Equation (\ref{eq:F}). The
{\cocoa} function \texttt{IdealAndSeparatorsOfPoints} is used in Example~\ref{running_example2}. If the complex coding is used, the coefficients of the
indicator function of a fraction of a full factorial design can be easily computed from the sum of the values of each
monomial response on all the fraction points: $b_\alpha=\frac 1 {\# \design} \ \sum_{\zeta \in \fraction} x^\alpha
(\zeta)$.

\begin{example}
We consider the fraction $\fraction = \{(-1,-1,1), (-1,1,-1)\}$ of the $2^3$ full factorial design. All the monomial
responses  on $\fraction$ are
\begin{equation*}
\begin{array}{r|r|r|r|r|r|r|r}
 1 & x_1 & x_2 & x_3 & x_1 x_2 & x_1 x_3 & x_2 x_3 & x_1 x_2 x_2 \\ \hline
 1 &  -1 &  -1 &   1 &       1 &      -1 &      -1 &       1  \\
 1 &  -1 &   1 &  -1 &      -1 &       1 &      -1 &       1
\end{array}
\end{equation*}
and the coefficients $b_\alpha$ are:
\begin{equation*}
 b_{(0,1,0)}= b_{(0,0,1)}= b_{(1,1,0)}=b_{(1,0,1)}=0  \qquad b_{(0,0,0)}=b_{(1,1,1)}= \frac 2 4 \qquad  b_{(1,0,0)}= b_{(0,1,1)}= - \frac 2 4 \ .
\end{equation*}
Hence, the indicator function is $ F= \frac 1 2 \left( 1 - x_1 -x_2 x_3+x_1x_2 x_3\right)$.
\end{example}

\begin{example}[Continuation of Example~\ref{running_example}]\label{running_example2}
 The indicator function of $\fraction_{A}$ as subset of a $3^2$ factorial design is \begin{equation*} F = -2\, x_1 x_2 +x_1^2+x_2^2. \end{equation*} A {\cocoa} algorithm for the computation is
provided in Item 1 of the Appendix. If the levels are coded with the 3-rd roots of the unity, the indicator function is
\begin{equation*} F = \frac 4 9 (1+x_1+x_2-x_1^2-x_2^2-x_1x_2-x_1x_2^2-x_1^2x_2-x_1^2x_2^2). \end{equation*} We can check that
there are no mutually orthogonal terms. The indicator function of $\fraction_{P}$ will be computed in Example
\ref{simpleMixture1.2}.
\end{example}

\subsection{Gr\"obner bases}
When working with polynomial ideals, it is useful to choose a standard form for writing the polynomials. This can be
done by choosing a term ordering. That is an order relation on the monomials of $R$, compatible with the product of
monomials. In more details, a monomial is written as  $x^\alpha=x_1^{\alpha_1} \ldots x_m^{\alpha_m}$ with
$\alpha=(\alpha_1,\ldots,\alpha_m)$ and $\alpha_i\in \mathbb Z_{\geq 0}$ for all $i=1,\ldots,m$. The ordering relation
$\succ$ is a term ordering if 1) $x^\alpha\succ 1$ for all exponents $\alpha$ and 2) if $x^\alpha \succ x^\beta$ then
$x^{\alpha+\gamma}\succ x^{\beta+\gamma}$ for all $\alpha,\beta,\gamma\in\mathbb Z_{\geq 0}^m$. The leading term of
$f\in R$ with respect to $\succ$ is the largest term of $f$ with respect to $\succ$ and we write
$\operatorname{LT}_\succ(f),$ or $ \operatorname{LT}(f) $ if no confusion arises.

\begin{example}  \label{ex:lex}
The lexicographic term ordering is defined as $ x_1^{\alpha_1} \cdots x_m^{\alpha_m} \succ x_1^{\beta_1} \cdots
x_m^{\beta^m} $ if $ \alpha_1 = \beta_1, \dots, \alpha_{i-1} = \beta_{i-1}, \alpha_i > \beta_i$, for some $i \in
\{1,2,\dots,m\}$. In the tdeg ordering $ x_1^{\alpha_1} \cdots x_m^{\alpha_m} \succ x_1^{\beta_1} \cdots x_m^{\beta^m}
$ if, and only if, $\sum \alpha_i>\sum \beta_i$ or  $\sum \alpha_i=\sum \beta_i$ and the right-most nonzero entry of
$(\alpha_1-\beta_1,\ldots, \alpha_m-\beta_m)$ is negative. \hfill $\square$
\end{example}

Given a term ordering $\succ$ and an ideal $I\subset R$, let $\operatorname{LT}_\succ(I)=\langle
\operatorname{LT}_\succ(f):f\in I\rangle$ be the set of leading terms of all polynomials in $I$.
\begin{definition} \label{GBasis}
 Let $I$ be an ideal, $\succ$ a term  ordering and $G=\{g_1,\ldots, g_t \} \subseteq I$.
\begin{enumerate}
  \item $G$ is a \emph{Gr\"obner basis} (sometimes called a standard basis) of $I$ if $\operatorname{LT}_\succ(I)$ is
  generated by $\langle \operatorname{LT}_\succ(g):g\in G\rangle$.
  \item $G$ is a \emph{reduced} Gr\"obner basis if for all $g\in G$ the coefficient of the leading term of $g$ is $1$
  and no term of $g$ lies in $\langle \operatorname{LT}_\succ(G\setminus \{g\}) \rangle$.
  \end{enumerate}
\end{definition}
Note that a Gr\"obner basis of an ideal $I$ is a particular generator set of $I$. For every ideal $I$ and term ordering
$\succ$ there exist Gr\"obner bases of $I$ and a unique reduced Gr\"obner basis, see \cite[Ch.2]{cox|little|oshea:97}.
Gr\"obner bases of $I$ can be computed from any generator set of $I$ with the Buchberger algorithm which is implemented
in most softwares for algebraic computation. For every ideal there is a finite number of reduced Gr\"obner bases as the
termorderin varies, see \cite{mora|robbiano:88}.

\begin{example}[Continuation of Example~\ref{running_example}] \label{running_example3}
For any term ordering for which $x_1\succ x_2$ the reduced Gr\"obner basis representation of
$I(\fraction_{A})$ is given by the three polynomials $g_1=x_1^2+x_2^2-1$, $g_2=x_2^3-x_2$ and
$g_3=x_1x_2$. The polynomial $g_1$ indicates that the points of $\fraction_{A}$ are on the unit circle,
$g_2$ that the factor corresponding to $x_2$ has three levels $0,\pm 1$ and $g_3$ that at least one coordinate of each
point in $\fraction_{A}$ is zero.

The reduced Gr\"obner basis of $I(\fraction_{P})$ for any term ordering such that $x_1\succ x_2 \succ
x_3$ has five elements
\begin{equation*} \begin{array}{l}
h_1=x_1+x_2+x_3-1 \\
h_2=x_3(x_3-1/2)(x_2+x_3/2-1/2)\\
h_3=x_3 (x_2^2+x_3^2/2-x_2/2-3/4 x_3+1/4)\\
h_4=(x_2-x_3)(x_2^2+x_2x_3+x_3^2-3/2x_3-3/2x_3+1/2)\\
h_5=x_3 (x_3-1/3) (x_3-1/2) (x_3-1)
\end{array} \end{equation*}
In $h_1$ we can recognize the sum to one condition for a mixture design and in $h_5$ the levels of the $x_3$ factor.
\hfill $\square$
\end{example}

If we fix a Gr\"obner basis of an ideal $ I \subseteq R,$ then for every equivalence class $ [f] \in R/I $ there exists
a unique $ f' \in [f] $ written as combination of monomials not divisible by any monomial in $ \operatorname{LT}(I)$.
The polynomial $f'$ is called the \emph{normal form} of $ f$ and we write $NF(f)$, see \cite{cox|little|oshea:97}.
Hence, Gr\"obner bases give a tool to effectively perform sum and products in the quotient ring $ R/I$.

Given a design $ \fraction,$ the quotient ring $ R/I(\fraction) $ is a vector space of dimension equal to the
cardinality of $ \fraction$. A monomial basis of $R/I(\fraction)$ can be used as support for a statistical (saturated)
regression model as the corresponding information matrix is invertible. A vector space basis of $R/ I(\fraction)$ can
be determined by using Gr\"obner bases, and the procedure, which we call Gbasis/LT, is the following. The monomials
which are not in $ \operatorname{LT}(I(\fraction)) $ are linearly independent over the design. Call this set $
\est_\mathcal F$. They are those monomials which are not divided by any of $\operatorname{LT}(g)$ for all $g$ in a
Gr\"obner basis of $I(\fraction)$. This is equivalent to the fact that the columns of the matrix $X=\left[ \zeta^\alpha
\right]_{\zeta\in \fraction,\alpha\in L}$ are linearly independent  (\cite{pistone|riccomagno|wynn:2001}), where $ L $
is the set of the exponents of the elements of $ \est_\fraction.$

\begin{example}[Continuation of Example~\ref{running_example}]
In the setting and notation of Example \ref{running_example3}, the leading terms of the Gr\"obner basis elements of
$I(\fraction_{A})$ are $\operatorname{LT}(g_1)=x_1^2$, $\operatorname{LT}(g_2)=x_2^3$ and
$\operatorname{LT}(g_3)=x_1x_2$. The four monomials $1,x_1,x_2,x_2^2$ are not divisible by these leading terms,
equivalently the first four columns of $X$ below give an invertible matrix.
\begin{equation*}
 X=\left[ \begin{array}{rrrrr|l} 1&x_1&x_2&x_2^2& x_1^2&\zeta \\ \hline
 1& 1 & 0 & 0 & 1 & (1,0) \\
 1& 0 & 1 & 1 & 0 & (0,1) \\
 1&-1 & 0 & 0 & 1 & (-1,0) \\
 1& 0 & -1& 1 & 0 & (0,-1)
\end{array}  \right]
\end{equation*}
The linear response model build on any combination of the first four columns of $X$ is identified. The last column lists
the design points. From $g_1=x_1^2+x_2^2-1$ we deduce that $x_1^2=1-x_2^2$, that is the fifth column of $X$ is the
difference between the first column and the fourth column.

For $\fraction_{P}$, we have $ \est_{\fraction_{P}}=\{1,x_3,x_3^2,x_3^3,x_2,x_2x_3,x_2^2\}$. Note that there is no term involving $x_1$ as $g_1=x_1+x_2+x_3-1$ confounds $x_1$ with $x_2$ and $x_3$ (see Section~\ref{mixsec}). \hfill $\square$
\end{example}

The design ideal embeds all possible aliasing relations imposed on polynomial responses by a design. A Gr\"obner basis is
a special finite set of aliasing relations among polynomial responses defined on the fraction and are a basis of all
other alias relations, see \cite{holliday|riccomagno|wynn|pistone:99}. Other special finite sets can be found using the
indicator function. Theorem 4 and Example 10 in  \cite{pistone|riccomagno|rogantin:2007} present an algorithm based on
the computation of the normal form, with respect to the full design, of all the monomial responses multiplied by the
indicator function of the fraction. Knowledge of the problem to be modelled indicates whether the sets of aliasing
relations from  the indicator function or from the Gr\"obner bases are more informative.


\subsection{Designs for experiments with mixtures} \label{mixsec}
We need to refer here a short summary of \cite{maruri-aguilar|notari|riccomagno:2007}. Each point
$\zeta=(\zeta_1,\ldots,\zeta_m)\in k^m$ of a design $\fraction$ for a mixture experiment satisfies the conditions that
$\zeta_i\geq 0$ for all $i=1,\ldots,m$ and $\sum_i \zeta_i=1$. The polynomial $\sum_i x_i-1\in I(\fraction)$ and thus
not all the linear terms can be in the support of a regression model simultaneously. In particular the Gbasis/LT
procedure applied to a design for a mixture experiment returns slack models which include the identity/intercept and
miss completely one factor. For a mixture design $\fraction\subset k^m$ there exists, well defined, a unique cone
passing  through $\fraction$ and the origin:
\begin{equation*}
\mathcal C_{\fraction} = \left\{ a\zeta : \zeta\in \fraction \text{ and }
  a\in \mathbb R \right \} \subseteq k^m  .
\end{equation*}
This can be thought of as a projective variety. The Gbasis/LT procedure is specialized to mixture designs exploiting
the fact that projective varieties and homogeneous polynomials are naturally associated.  Consider a term order, the
cone $\mathcal C_{\fraction}$ and all homogeneous polynomials of degree $s$ in $R$. Compute a Gr\"obner basis of
$\mathcal C_{\fraction}$, its leading terms and the set of monomials of degree $s$ not divisible by  the leading terms.
Then, the information matrix for this set and $\fraction$ is invertible  (see
\cite{maruri-aguilar|notari|riccomagno:2007}).

\begin{example}
Consider the design $\fraction=\{(0,0,1),(0,1,0),(1,0,0),(1/3,1/3,1/3)\}\subset\fraction_P$,  and any term ordering
such that $x_1\succ x_2\succ x_3$. Then $I({\mathcal
  C}_{\fraction}) = \langle \underline{x_1x_3}-x_2x_3,  \underline{x_1x_2}-x_2x_3,
\underline{x_2^2x_3}-x_2x_3^2 \rangle$. The leading terms are underlined. In Table 2.1 various homogeneous models
identified by $\fraction$ are given. Notice that they are Kronecker models generalizing those in
\cite{draper|pukelsheim:98}.
\renewcommand{\arraystretch}{1.1}

\begin{table}[h]\label{aaaa}
 \caption{\small Homogeneous models identified by $\fraction=\{(0,0,1),(0,1,0),(1,0,0),(1/3,1/3,1/3)\}$ }
\begin{center}
{\small\begin{tabular}{lll}
 $s$&\text{list of monomials of degree $s$}&\text{degree $s$ standard monomials} \\ \hline
 0 & 1 & 1\\
 1 & $x_1,\, x_2, \, x_3$ & $x_1,\, x_2, \,x_3$\\
 2 & $x_1^2,\, x_1x_2, \, x_2^2, \, x_1x_3, \,x_2x_3, \, x_3^2$ & $x_1^2, \,x_2^2,\,x_2x_3, \, x_3^2$\\
 3 & $x_1^3,\,  x_1^2x_2,\,  x_1x_2^2,\,  x_2^3,\,  x_1^2x_3, x_1x_2x_3,\,  x_2^2x_3,\,  x_1x_3^2,\,  x_2x_3^2,\,  x_3^3$
 & $x_1^3,\,x_2^3,\,  x_2x_3^2,\,  x_3^3$\\
 $s>3$ &$x_1^s,\, x_1^{s-1}x_2,\, x_1^{s-2}x_2^2,\, \ldots,\, x_3^s$&$x_1^s,\, x_2^s,\, x_2x_3^{s-1},\, x_3^s$
\end{tabular}
\\
\hfill $\square$ } \end{center} \end{table}
\end{example}

\begin{example} [Continuation of Example~\ref{running_example}]
\label{running_example4} The set $\{ \underline{x_2^2 x_3}-x_2 x_3^2, \underline{x_1^2 x_3} - x_1 x_3^2,
\underline{x_1^2 x_2} -x_1 x_2^2 \}$ is the reduced Gr\"obner basis of $I(\mathcal C_{\fraction_{P}})$ with respect to
the  tdeg ordering with $x_1\succ x_2\succ x_3$. For $s=3$, $ \est_{\fraction_P}=\{x^3,xy^2,y^3, $ $ xyz,xz^2,yz^2,z^3 \}$
gives the support for a homogeneous saturated regression model identified by $\mathcal C_{\fraction_{P}}$.
\end{example}

We need to observe now that ratios of homogeneous polynomials of the same degree are functions well defined on the
affine cone of a mixture design.
\begin{example} \label{ex:lines}
Consider the three points $ P_1=(1,0,0), P_2=(0,1,0), P_3=(0,0,1) $ and the lines $L_i $ through $ P_i $ and the origin $
(0,0,0)$. The cone over the points $ P_1, P_2, P_3 $ is equal to $ L_1 \cup L_2 \cup L_3.$ Let $ F $ be the function
which assumes the value $ i $ on $ L_i, i=1, 2, 3$. First, we show that $ F $ cannot be represented as a polynomial.
In fact, if $ f $ is a polynomial such that $ f(x,0,0) = 1 $ for each $ x \not= 0,$ then $ f = 1 + f_1(y,z).$ But, $
f(0,y,0) = 1 + f_1(y,0) = 2 $ for every $ y \not= 0,$ and so $ f = 2 + f_2(z).$ Hence, $ f(x,0,0) = 2,$ and so $ F $
cannot be represented by a polynomial. Next, note that $ \frac{x+2y+3z}{x+y+z} $ represents $ F $ on the considered
cone. \hfill $ \square $
\end{example}

The above leads to the following definition, which specializes the ideal of indicator function to mixture designs. The
larger design $\design$ could be any mixture design, i.g. a simple lattice, see \cite{scheffe:58} or a simple centroid
design, see \cite{scheffe:63}. Example \ref{ex:lines} shows that we need to consider ratios of polynomials of the same
degree to define a function on $\mathcal C_\design$ and not simply a polynomial, see Definition 2.B2 and 2.B1 below. Furthermore, the
notion of separator as introduced is for consistency with algebraic standard, and takes zero value in $\design
\setminus \fraction$.

\begin{definition} Let $\fraction\subseteq \mathcal D \subset k^m$ be designs for a mixture experiment.
\begin{itemize}
\item [A1)] A \emph{separator} of $\zeta\in \fraction$ is any homogeneous polynomial $S_\zeta$ such that $S_\zeta\not\in
  I(\mathcal C_{\{F\}})$ and $S_\zeta\in I(\mathcal C_{\mathcal  F\setminus \{\zeta\} })$.
\item [A2)] The \emph{separator function} of $\zeta\in \fraction$ is
  $S_{\{\zeta\}}= \displaystyle \frac{S_\zeta}{\left( \sum_{i=1}^m x_i\right)^{s_\zeta}} $
  where $s_\zeta$ is the degree of $S_\zeta$.
\item [B1)] A \emph{separator} of $\fraction \subset \mathcal D$ is any homogeneous polynomial $S_\fraction$ such that
$S_\mathcal F\not\in I(\mathcal C_{\fraction})$ and $S_{\fraction} \in I(\mathcal C_{\mathcal D\setminus {\fraction}
})$.
 \item [B2)]The \emph{separator function} of ${\fraction}\subset \mathcal D$ is $SF_{\fraction}=
 \displaystyle \frac{S_{\mathcal F}}{\left( \sum_{i=1}^m x_i\right)^{s_{\fraction}}} = \sum_{\zeta \in
    \fraction} S_{\zeta}  $ where $s_{\fraction}$ is the degree of $S_{\fraction}$.
\end{itemize}
\end{definition}

\begin{example} \label{simpleMixture1.2}
The cone generated by $\fraction_{PF}=\{(0,0,1),(0,1,0),(1,0,0),(1/3,1/3,1/3)\}\subset \fraction_{P}$ is the same as the
cone generated by $\fraction=\{(0,0,1),(0,1,0),(1,0,0),(1,1,1)\} \subset \design$ where $\design\setminus\fraction =\{
(1,1,0), (1,0,1), (0,1,1)\}$. We have
    \begin{equation*}
 SF_\fraction=\displaystyle  \frac{x_1^6-2x_1x_2^5+732x_1x_2x_3^4-2x_1x_3^5+x_2^6-2x_2x_3^5+x_3^6}{(x_1+x_2+x_3)^6} \ ,
    \end{equation*}
indeed $S_\fraction(\zeta)=1$ if $\zeta\in \fraction$ or if $\zeta\in \fraction_{PF}$ and $S_\fraction(\zeta)=0$ if
$\zeta\in \mathcal D\setminus \fraction$ or $\zeta\in \mathcal \fraction_P\setminus \fraction_{FP}$. In Example
\ref{ex:12}, we  shall compute a lower degree separator for the same fraction. This shows that there exist different
ways of writing in polynomial form the separators for the same fraction. \hfill $\square$
\end{example}

We are now ready to name the two polynomial representations of a design.

\begin{definition} Let $\succ$ be a term ordering on $k$, $\fraction\subseteq \mathcal D \subset k^m$ two designs.
  \begin{enumerate}
  \item The \emph{Gr\"obner representation} of $\fraction$ with respect to $\succ$ is the reduced Gr\"obner basis of
  $I(\fraction)$ with respect to $\succ$.
  \item Let $\{d_1,\ldots,d_p\} \subset R$ be the reduced $\succ$-Gr\"obner basis of $I(\mathcal D)$ and $F$
  the indicator function of $\fraction$ in $\mathcal D$. The \emph{indicator representation} of $\fraction\subset \mathcal D$
  with respect to $\succ$ is $\{d_1,\ldots,d_p,F-1\}$.
  \end{enumerate}

Suppose now that $\fraction$ and $\mathcal D$ are designs for mixture experiments.
\begin{enumerate}
  \item The \emph{homogeneous Gr\"obner representation} of $\fraction$ with respect to $\succ$ is the reduced Gr\"obner basis of $I(\mathcal C_\fraction)$ with
    respect to $\succ$.
  \item Let $\{d_1,\ldots,d_p\} \subset R$ be the reduced  $\succ$-Gr\"obner basis of $I(\mathcal C_\mathcal D)$ and
    $S_{\mathcal D \setminus \fraction} $ the separator of $\mathcal D  \setminus \fraction$ in $\mathcal D$. The
    \emph{homogeneous indicator representation} of $\fraction\subset \mathcal D$
    with respect to $\succ$ is $\{d_1,\ldots,d_p, S_{\mathcal D \setminus \fraction} \}$.
  \end{enumerate}
\end{definition}

As a mixture design is in particular a design, it admits both the homogenous representation and the non-homogeneous
representation. Of course, when using the non-homogeneous one we loose the advantages introduced with the design cone.

While in the non mixture case $ \{ d_1, \dots, d_p, F-1 \} $ is a generating set of $ I(\fraction)$, in the mixture
case the ideal $ I(\mathcal C_\fraction) $ is the saturation of the ideal $ \langle d_1,\ldots,d_p, S_{\mathcal D
\setminus \fraction} \rangle$. The saturation $ I^{sat} $ of a homogeneous ideal $ I \subset R $ contains all the
homogeneous polynomials $ f $ such that $ f x_i^{m_i} \in I $ for some $ m_i \in \mathbb Z_{\geq 0} $ and every $ i =
1,\dots, m.$ In fact, the given generators are homogeneous and so they span only homogeneous polynomials of degree not
smaller than the degrees of the generators, while in the saturation we obtain also polynomials of degree smaller than
the degree of the generators. For example, the ideals $ \langle x \rangle $ and $ \langle x^2, xy \rangle $ in $ R =
k[x, y]$ have the cone over $ P=(0, 1) $ as zero set. Furthermore, $ I(\mathcal C_P) = \langle x \rangle = \langle x^2,
xy \rangle^{sat}$. For more on saturation see \cite{cox|little|oshea:2005} and \cite{kreuzer|robbiano:2000}.

We could substitute the requirement of reduced Gr\"obner bases with that of generating sets. Uniqueness of
representations will be lost, while there will be no longer dependence on a term-ordering.

\section{Changing representation} \label{sec:changing}

Let $F$ be the indicator function of $\fraction$ in $\mathcal D\subset k^m$, $I(\mathcal D)=\langle
d_1,\ldots,d_p\rangle$ and $I(\fraction)=\langle d_1,\ldots,d_p,g_1,\ldots,g_q \rangle$. Note that usually the
generator set $\{d_1,\ldots,d_p\}$ is known and has an easy structure, often being $\mathcal D$ a full factorial design
and hence $d_j$ a polynomial in $x_j$ for $j=1,\ldots,m$, or a simplex lattice design in the mixture case. The
difficulty and interest are related to $\fraction$. Then,
\begin{enumerate}
\item
 $I(\fraction)=\langle d_1,\ldots,d_p,F-1\rangle$. This means that once $F$ is known, a Gr\"obner basis of
 $I(\fraction)$ is  obtained by applying the Buchberger algorithm to $\{d_1,\ldots,d_p,F-1\}$.
 \item
 Vice versa, the lexicographic Gr\"obner basis (see Example \ref{ex:lex}) for $h\succ f \succ x$  of the ideal
\begin{equation*}\langle d_1,\ldots,d_p,(1-f)-\sum_j h_j g_j, fg_1,\ldots,fg_q \rangle
\end{equation*}
contains a unique polynomial of the form $f-p(x)$ where $p$ is a polynomial in the $x$ indeterminates only. Then the
evaluation function $F:\mathcal D \longrightarrow \{0,1\}$   defined as $F(d)=p(d)$ for $d\in \mathcal D$, is the
indicator function of $\fraction$ in $\mathcal D$.  See \cite[Ch. 6, Th. 2 and 3]{pistone|riccomagno|rogantin:2007},
\end{enumerate}

Items 1. and 2. above provide algorithms to switch from indicator function representation to Gr\"obner basis
representation and vice versa. While the passage from the indicator function to the Gr\"obner representation is
relatively easy as it consists of the union of polynomials, equivalently a sum of ideals, the computation of the
lexicographic Gr\"obner basis required for the passage to the indicator function representation can be computationally
expensive and often the computation does not terminate. In Section \ref{sec:algorithm} we describe a faster algorithm
for this.

Items 1. and 2. are easily adapted to the mixture/homogeneous case by considering the cone ideal and the (rational)
separator function, i.e. ideals generated by homogenous polynomials, for example, $ F - 1 $ has to be substituted with
$ S_{\fraction}-(\sum {x_i})^s$ where $s = \deg(S_{\fraction}) $. Moreover, each time we define an ideal, we must
saturate it, to compute generators of small degree.  See Item 2 of the Appendix for a {\cocoa} algorithm.

\section{An efficient algorithm}\label{sec:algorithm}

In this section, we present a different and more efficient algorithm to switch from the Gr\"obner representation to the
indicator one for a fraction $\fraction$ of a design $\design$.
The algorithm is based on the following remark: a polynomial $ f $ interpolating the indicator function $ F $ of $
\fraction $ belongs to $ I(\mathcal D \setminus \fraction) $ because of the definition of design ideal. Moreover, $ 1 - f
$ belongs to $ I(\fraction) $ for the same argument. If $ G = \{ g_1,\ldots, g_q \} $ is a Gr\"obner basis of $I(\fraction) $ then the second remark says that $ 1 - f = \sum_{i=1}^q h_i g_i,$ for some $ h_1, \dots, h_q \in R.$
Hence, $ f = 1 - \sum_{i=1}^q h_i g_i $ has normal form $ 0 $ in $ R/I(\mathcal D \setminus \fraction).$

The problem now is to efficiently choose $ h_1, \dots, h_q $.   If $ \mathcal D $ has cardinality $ N $ and $ \fraction
$ has cardinality $ n,$  then we want $ h_1, \dots, h_q $ to depend by $ N - n $ parameters because $ f = 0 $ in $
R/I(\mathcal D \setminus \fraction) $ and the dimension of $R/I(\mathcal D \setminus \fraction) $ as vector space is $N
- n $, written as $ \dim R/I(\mathcal D \setminus \fraction) = N - n $. Moreover, we would like to compute $ h_1,
\dots, h_q $ with linear algebra techniques, because they usually have smaller computational complexity than Gr\"obner
bases based algorithms.

Lemma \ref{lemma} and Theorem \ref{basis-qring} describe how to chose efficiently $ h_1, \dots, h_q.$

\begin{lemma} \label{lemma} Let $ \est_\design$ (resp. $ \est_{\fraction} $) be a monomial basis of $ R/I(\mathcal D) $ (resp. $ R/I(\fraction) $) computed by using the previously described procedure Gbasis/LT. Then $ \est_\fraction \subset \est_\design.$
\end{lemma}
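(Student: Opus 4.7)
The plan is to exploit the elementary set-theoretic fact that $\fraction \subseteq \mathcal D$ forces a reverse inclusion of design ideals, and then to translate this inclusion through the Gbasis/LT construction into the desired inclusion of the complementary monomial sets.

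First I would observe that by the very definition of design ideal, any polynomial vanishing on every point of $\mathcal D$ automatically vanishes on every point of $\fraction \subseteq \mathcal D$. Hence $I(\mathcal D) \subseteq I(\fraction)$. Fixing the term ordering $\succ$ used by the Gbasis/LT procedure, this inclusion lifts to leading term ideals: if $f \in I(\mathcal D)$ then $f \in I(\fraction)$, so $\operatorname{LT}_\succ(f) \in \operatorname{LT}_\succ(I(\fraction))$, giving
\begin{equation*}
\operatorname{LT}_\succ(I(\mathcal D)) \ \subseteq \ \operatorname{LT}_\succ(I(\fraction)).
\end{equation*}

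Next I would recall that the Gbasis/LT procedure defines $\est_\design$ (respectively $\est_\fraction$) as the set of monomials in $R$ that are \emph{not} divisible by any element of $\operatorname{LT}_\succ(I(\mathcal D))$ (respectively $\operatorname{LT}_\succ(I(\fraction))$). From the inclusion of leading term ideals above, a monomial divisible by some element of $\operatorname{LT}_\succ(I(\mathcal D))$ is \emph{a fortiori} divisible by some element of $\operatorname{LT}_\succ(I(\fraction))$. Taking complements (within the set of all monomials of $R$) reverses the inclusion, so every monomial that survives in $\est_\fraction$ also survives in $\est_\design$, yielding $\est_\fraction \subseteq \est_\design$.

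There is no real obstacle here; the only point worth emphasising is that the two monomial bases must be produced with respect to the same term ordering, which is implicit in the statement since both invocations of Gbasis/LT are understood to use the term ordering fixed at the start of the section. The argument is essentially the order-reversing correspondence between ideals and their standard monomial complements, specialised to the pair $I(\mathcal D) \subseteq I(\fraction)$.
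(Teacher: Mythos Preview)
Your proof is correct and follows essentially the same route as the paper: both start from the ideal inclusion $I(\mathcal D)\subseteq I(\fraction)$ induced by $\fraction\subseteq\mathcal D$, and then argue the contrapositive that a monomial excluded from $\est_\design$ (because it lies in $\operatorname{LT}_\succ(I(\mathcal D))$) is necessarily excluded from $\est_\fraction$. Your formulation via the inclusion of leading term ideals and complementation is a slightly cleaner packaging of the same argument.
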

\begin{proof}
$ \fraction $ is a fraction of $ \mathcal D,$ i.e. $ \fraction \subset \mathcal D$  and $I(\mathcal D)\subset
I(\fraction)$. We prove the equivalent statement: if $ x^\alpha \notin  \est_\design $ then $ x^\alpha \notin
\est_\fraction $. Let $ x^{\alpha} \notin \est_\design$ then there exists a polynomial in $ I(\mathcal D) $ whose
leading term is $ x^\alpha.$ The inclusion between the ideals shows that $ x^\alpha \notin \est_\fraction,$ and the
statement holds.
\end{proof}

Let $ x^{\alpha_1}, \dots, x^{\alpha_{N-n}} $ be the monomials in $ \est_\design\setminus \est_\fraction.$ For each $ j
= 1, \dots, N-n $ there exists $ g_{i(j)} $ in the Gr\"obner basis $ G $ of $ I(\fraction) $ such that $ x^{\alpha_j} =
m_j \operatorname{LT}(g_{i(j)}) $ for some monomial $ m_j,$ because of the construction of the monomial basis of a
quotient ring.
\begin{theorem} \label{basis-qring} With the notation as above,
 the classes of $ m_1 g_{i(1)}, \dots, m_{N-n} g_{i(N-n)} $ are a basis of $ R/I(\mathcal D \setminus \fraction).$
\end{theorem}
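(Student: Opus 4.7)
The plan is to show that the $N-n$ classes $[m_1 g_{i(1)}], \ldots, [m_{N-n} g_{i(N-n)}]$ are linearly independent in $R/I(\mathcal{D}\setminus \fraction)$, since the dimension of this quotient equals $|\mathcal{D}\setminus \fraction| = N-n$ (a design ideal is radical zero-dimensional, so the quotient dimension equals the number of points). Linear independence plus correct count gives a basis.

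First I would record the two basic facts needed. (i)~Each $m_j g_{i(j)}$ lies in $I(\fraction)$, because $g_{i(j)} \in I(\fraction)$. (ii)~By construction, the leading term of $m_j g_{i(j)}$ is $m_j\,\operatorname{LT}(g_{i(j)}) = x^{\alpha_j}$, and by Lemma~\ref{lemma} applied from the other direction, $x^{\alpha_j} \in \est_{\design}$, i.e.\ $x^{\alpha_j}$ is not divisible by any leading term of a Gr\"obner basis of $I(\design)$.

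For the linear independence step, suppose that $\sum_j c_j\, m_j g_{i(j)} \in I(\design \setminus \fraction)$. Combined with fact (i) and the identity
\begin{equation*}
I(\fraction) \cap I(\design \setminus \fraction) \;=\; I\bigl(\fraction \cup (\design \setminus \fraction)\bigr) \;=\; I(\design),
\end{equation*}
we conclude $\sum_j c_j\, m_j g_{i(j)} \in I(\design)$. Now I would apply the normal form with respect to a Gr\"obner basis of $I(\design)$: by fact (ii), $x^{\alpha_j}$ survives the reduction, so
\begin{equation*}
\mathrm{NF}_{I(\design)}\bigl(m_j g_{i(j)}\bigr) \;=\; x^{\alpha_j} \;+\; (\text{terms in } \est_{\design} \text{ strictly below } x^{\alpha_j}).
\end{equation*}
Let $j^\star$ index the largest monomial $x^{\alpha_{j^\star}}$ with $c_{j^\star}\neq 0$. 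Since all lower-order contributions from the other $m_{j'} g_{i(j')}$ (with $c_{j'}\neq 0$) involve only monomials strictly smaller than $x^{\alpha_{j'}} \prec x^{\alpha_{j^\star}}$, and the tail of $m_{j^\star} g_{i(j^\star)}$ is strictly below $x^{\alpha_{j^\star}}$, the coefficient of $x^{\alpha_{j^\star}}$ in $\sum_j c_j\, \mathrm{NF}_{I(\design)}(m_j g_{i(j)})$ is exactly $c_{j^\star}$. Setting this normal form to zero (as demanded by membership in $I(\design)$) forces $c_{j^\star}=0$, contradiction.

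The main obstacle I anticipate is the bookkeeping in the final leading-term argument: one must carefully verify that after reducing each $m_j g_{i(j)}$ modulo $I(\design)$ the top monomial is still $x^{\alpha_j}$ (which follows from $x^{\alpha_j}\in \est_{\design}$) and that cross-cancellation between the tails of different $m_{j'}g_{i(j')}$ cannot eliminate the top monomial $x^{\alpha_{j^\star}}$. Once this is handled cleanly, the dimension count $\dim_k R/I(\design\setminus\fraction) = N-n$ converts linear independence into a basis statement with no further work.
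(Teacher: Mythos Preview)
Your proposal is correct and follows essentially the same route as the paper: show linear independence by observing that any dependence relation lies in $I(\fraction)\cap I(\design\setminus\fraction)=I(\design)$, then derive a contradiction from the fact that the leading monomials $x^{\alpha_j}$ of the $m_j g_{i(j)}$ are distinct elements of $\est_\design$ and hence cannot occur as leading terms of elements of $I(\design)$. The paper phrases the final step directly via leading terms rather than via normal forms, but the content is the same, and both arguments implicitly invoke the dimension count $\dim_k R/I(\design\setminus\fraction)=N-n$ to pass from independence to a basis.
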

\begin{proof}
It is sufficient to prove that they are linearly independent. Let $ a_1, \dots, a_{N-n} \in k $ be elements of
the ground field such that
 \begin{equation*}
 a_1 m_1 g_{i(1)} + \dots + a_{N-n} m_{N-n} g_{i(N-n)}
 \end{equation*}
is the zero class in $ R/I(\mathcal D \setminus \fraction),$ that is to say, it belongs to $ I(\mathcal D \setminus
\fraction) $ because of the definition of the quotient ring. Hence, we have that $ a_1 m_1 g_{i(1)} + \dots + a_{N-n}
m_{N-n} g_{i(N-n)} \in I(\mathcal D) $ because it vanishes also on the points in $ \fraction $ being a combination of
elements in a Gr\"obner basis of $ I(\fraction).$ By construction, the leading terms of $ m_j g_{i(j)} $ are all
different and so the leading term of $ a_1 m_1 g_{i(1)} + \dots + a_{N-n} m_{N-n} g_{i(N-n)},$ say $ m_1
\operatorname{LT}(g_{i(1)},$ is in $\est_\design$. This is not possible, and so $ a_1 = 0.$ By iterating the argument a
finite number of times, we obtain that $ a_1 = \dots = a_{N-n} = 0 $ and the claim follows.
\end{proof}

We know that $ f = 1 - \sum_{i=1}^q h_i g_i $ and so we have that the indicator polynomial $ f $ can have the form $ f
= 1 - a_1 m_1 g_{i(1)} + \dots + a_{N-n} m_{N-n} g_{i(N-n)} \in I(\mathcal D \setminus \fraction).$ Hence, if we
compute the normal form $ NF(f) $ of $ f $ in the ring $ R/I(\mathcal D \setminus \fraction),$ it must be $ 0.$ In
general, $ NF(f) $ is a polynomial with monomials in $ \est_{\design \setminus \fraction} $ and linear combinations of
$ a_1, \dots, a_{N-n} $ as coefficients. Therefore, we obtain a linear system in the unknowns $ a_1, \dots, a_{N-n} $
which has a unique solution by Theorem \ref{basis-qring}. The resulting algorithm is implemented in Maple in  Item 3(a)
of the Appendix.

A few modifications are needed to adapt the algorithm to mixture designs. First, we consider homogeneous polynomials of a fixed
degree. To speed up computations, we work with polynomials of degree $ s $ where
\begin{equation*}
 s = \min \left\{t \in \mathbb Z_{> 0} \vert \dim_{\mathbb R} \left( \frac{R}{I(\mathcal D)} \right)_t = N \right\} \ .
\end{equation*}
The integer $ s $ can be easily computed by using the Hilbert function of $ R/I(\mathcal D) $ that calculates the
dimension as vector space of the degree $ t $ polynomials in $ R/I(\mathcal D) $ for every $ t \in \mathbb Z_{> 0}.$
Second, we compute monomial bases $ \est_{\design,s}$ and $ \est_{\fraction,s}$ of the degree $s$ pieces of the quotient
rings $ (R/I(\design))_s$ and $ (R/I(\fraction))_s,$ respectively. Third, the indicator function is now a ratio $ F =
f/(x_1 + \dots + x_m)^s $ where $ \deg(f) = s $ with the constraints $ f \in I(\design \setminus \fraction) $ and $ (x_1
+ \dots + x_m)^s - f \in I(\fraction).$ Hence, the changes are straightforward and the result follows also in this case.
See Item 3(b) of the Appendix for a Maple algorithm .

\begin{example}[Continuation of Example~\ref{running_example4}] \label{ex:12}
The set $\{x_1^3,x_2^3,x_2 x_3^2, x_3^3 \}\subset \est_{\fraction_P} $ is identified by the fraction $\mathcal F = \{
(1,0,0),(0,1,0),(0,0,1),(1,1,1) \} \subset \fraction_P $. The ideal of the fraction is $\langle x_3 (x_1-x_2),(x_1-x_3)
x_2,x_2 x_3 (x_2-x_3) \rangle$, while the ideal of $\fraction_P \setminus \mathcal F $ is $\langle x_3 (x_1+x_2-x_3),
x_2 (x_1-x_2+x_3), x_1^2-x_2^2+2 x_2 x_3 - x_3^2, x_2 x_3 (x_2-x_3) \rangle$. Then, by applying the previous algorithm,
we obtain the following indicator function
\begin{equation*}  SF_\fraction =
\frac{x_1^3 + 3x_1^2 x_2 + 3x_1^2 x_3 - 5x_1 x_2^2 + 30x_1 x_2 x_3 - 5x_1 x_3^2+ x_2^3 + 11x_2^2 x_3 - 13x_2 x_3^2 +
x_3^3}{(x_1+x_2+x_3)^3}.
\end{equation*}
The following Maple script performs the computation using the algorithm in Item 3(b) of the Appendix.
{\scriptsize
\begin{verbatim}
 var:= [x,y,z] -- we change x_1, x_2, x_3 to x, y, z, respectively
 EstX:= {x^3, x y^2,y^3,x*y*z,x*z^2,y*z^2,z^3}  -- monomial basis of (R/I(F_P))_3
 EstY:= {x^3,y^3,y*z^2,z^3}  --  monomial basis of (R/I(F))_3
 GY:= {z*(x-y),(x-z)*y,y*z*(y-z)}
 GXMinusY:= z*(x+y-z), y*(x-y+z), x^2-y^2+2*y*z-z^2, y*z*(y-z)

 G_to_F_homo(GY,GXMinusY,EstX,EstY,var,t)
 x^3 + 3x^2y + 3zx^2 - 5xy^2 + 30xyz - 5xz^2+ y^3 + 11zy^2 - 13yz^2 + z^3
\end{verbatim}}
\hfill $\square$
\end{example}

\section{Example}\label{sec:ex}

The simplex centroid design is defined in \cite{scheffe:63} and used for experiments with mixtures. In $k$ factors it has
$2^k-1$ points. Fractions of the simple centroid design with many less points are defined in
\cite{mcmonkey|mezey|dixon|grenberg:2000} and used to screen for significant factors.  Their definition depends on an
integer parameter $p$ and the double interactions are completely aliased over any such fraction in sets of size $p$. A typical example is $\fraction_{MC}$ below
\begin{equation*} 
\begin{split}
\fraction_{MC}=\{
 &(1,0,0,0,0,0,0,0,0),(0,1,0,0,0,0,0,0,0),(0,0,1,0,0,0,0,0,0), \\
 &(0,0,0,1,0,0,0,0,0),(0,0,0,0,1,0,0,0,0),(0,0,0,0,0,1,0,0,0), \\
 &(0,0,0,0,0,0,1,0,0),(0,0,0,0,0,0,0,1,0),(0,0,0,0,0,0,0,0,1), \\
 &(1/3,1/3,1/3,0,0,0,0,0,0),(1/3,0,0,1/3,0,0,0,1/3,0),(0,1/3,0,0,1/3,0,0,0,1/3),\\
 &(0,0,1/3,0,0,1/3,1/3,0,0),(0,0,0,1/3,1/3,1/3,0,0,0),(0,1/3,0,1/3,0,0,1/3,0,0),\\
 &(0,0,1/3,0,1/3,0,0,1/3,0),(1/3,0,0,0,0,1/3,0,0,1/3), (0,0,0,0,0,0,1/3,1/3,1/3),\\
 &(1/3,0,0,0,1/3,0,1/3,0,0),(0,1/3,0,0,0,1/3,0,1/3,0),(0,0,1/3,1/3,0,0,0,0,1/3) \} \ .
\end{split}   \end{equation*}

The design $\fraction_{MC}$ can be seen as a subset of the simplex centroid design in $9$ factors  $\mathcal D_1$ and
also as a fraction of the simplex centroid which  includes the corner points of the simplex and the points with
coordinates equal to zero or 1/3, call it $\mathcal D_2$.

$\texttt{IdealOfPoints}(\fraction_{MC})$ with respect to the default term ordering in {\cocoa} is generated by 43
polynomials, while $\texttt{IdealOfProjectivePoints}(\fraction_{MC})$ is generated by 42 ones.  See
\cite{maruri-aguilar|notari|riccomagno:2007} for a discussion of these results.

The indicator functions of $\fraction_{MC}$ in $\mathcal D_1$ and of $ \fraction_{MC} $ in $\design_2 $, and the
separator of $\fraction_{MC}$ in $\mathcal D_2$ have been computed by using the Maple algorithm in the Appendix in less
than 10 sec. the first one, and in less than 1 sec. the last two. The indicator functions of $\fraction_{MC}$ in
$\mathcal D_1$ is a combination of 444 terms, in $\design_2 $ is a combination of 70 terms, and the separator of
$\fraction_{MC}$ in $\mathcal D_2$  is a combination of 165 terms.

\section{Discussion}

This note regards two polynomial representations of an experimental design $\fraction$ one of which uses the indicator
function of $\fraction$ in $\mathcal D$ and the other one uses Gr\"obner bases which does not require to think of
$\fraction$ as a fraction of the larger design $\mathcal D$. The Gr\"obner representation depends on a technical
object: a term ordering, while the indicator function is most informative with a complex coding of the factor levels.
In applied work term orderings have been used to the advantage of the statistical analysis, see
\cite{holliday|riccomagno|wynn|pistone:99}. In \cite{pistone|rogantin:2007} it is shown that the real part of the
complex response retains most of the properties of the full complex response while having a clearer physical
interpretation. Moreover, notice that most of the properties discussed in Section 2.1 depend intrinsically on the level
coding. A trivial example is that the $2^2$ full factorial design with levels $\pm 1$ is orthogonal for
$\{1,x_1,x_2,x_1x_2\}$ while with levels $\{0,1\}$ it is not.

Both representations can be used to identify alias relations imposed by $\fraction$ on $\est_\mathcal D$, which is a
finite set, or on some other sets of monomials, possible all the infinite set of monomials. Furthermore, both provide a
vector-space basis of the response space. This is hierarchical for the Gr\"obner basis representation.

The choice as to which representation to use should be made in the light of the interests of the practitioner. If the
responses have been collected and standard or slower techniques have not returned a satisfactory statistical analysis
or are not implementable (maybe because there are missing values with respect to the planned experiment), then it seems
convenient to apply the GBasis/LT procedure. This returns an identifiable  hierarchical regression model and the alias
relations in the Gr\"obner basis can be used to change model terms with more significant or interpretable model
interactions. Instead, prior to data collection, the indicator function seems a useful tool to select a design with
relevant properties by working on the coefficients of the indicator functions. Two issues have to be considered: 1.
the need of a complex coding for some properties and 2. the need to solve a system of polynomial equations to obtain
the point coordinates. Point 1 has been discussed above. In most practical cases Point 2 can be easily addressed by
computing a Gr\"obner basis with respect to a lexicographic ordering. Generally the joint use of the two representation
seems advisable, also in the light of the switching algorithms in Sections 3 and 4. Indeed, if one representation is
known then the other one can be computed using techniques of linear algebra. This is possible because designs are zero
dimensional varieties. The complexity of the algorithms in the Appendix is essentially the complexity of the computation of the normal form of a polynomial w.r.t. a Gr\"obner basis. In fact the last step of the algorithm consists in solving a linear system which has a smaller computational complexity. For the large designs of Section \ref{sec:ex} the algorithm gave the solutions in just a few seconds, as previously mentioned.

Finally, we wanted to have in the public domain a complete set of computer functions to perform the computations in the
diagram of the Introduction.

\section*{Appendix}\label{appendix}
\begin{enumerate}
\item The \cocoa\ code for the indicator function of $\fraction_{A}$ in Example~\ref{running_example2}.

{\scriptsize
\begin{verbatim}
Use S::=Q[x[1..2]];
 Define InFu(Points,D); ND:=Len(D); PA:=NewList(ND,0); P:=NewList(ND);
 For H:=1 To Len(Points) Do
  For K:=1 To Len(PA) Do If Points[H]=D[K] Then PA[K]:=1  End; End;
 End;
 IdD:=IdealAndSeparatorsOfPoints(D);
  For K:=1 To Len(PA) Do  P[K]:=PA[K]*IdD.Separators[K]  End;
 F:=Sum(P); Return F;
 End;
 D:=Tuples([-1,0,1], NumIndets()); PointsF:= [[1,0],[-1,0],[0,1],[0,-1]];
InFu(Points,D);
 \end{verbatim}}

\item The {\cocoa} code for   $S_\fraction$ of Example \ref{simpleMixture1.2},
 with the algorithm described in Item 2 of Section \ref{sec:changing}.

 {\scriptsize
\begin{verbatim}
 Use T::=Q[f h x[1..3]], Lex;
 Set Indentation;
 D:=[x[2]^2x[3] - x[2]x[3]^2, x[1]^2x[3] - x[1]x[3]^2, x[1]^2x[2] - x[1]x[2]^2];
      -- simplex lattice
 G:=[x[1]x[3] - x[2]x[3], -x[1]x[2] + x[2]x[3]]; -- (0,0,1),(0,1,0),(1,0,0),(1,1,1)
 P:=(x[1]+x[2]+x[3])^3; S:=[];
 For I:=1 To Len(G) Do
    L:=ConcatLists([D,[P-f-hG[I],fG[I]]]);
    Id:=Saturation(Ideal(L),Ideal(x[1],x[2],x[3]));
    GB:=ReducedGBasis(Id);     S:=Concat( [f-GB[1]], S);
 EndFor;
 SF:=NF(Product(S),Ideal(D)); SF;
\end{verbatim}}

\item The  Maple code of the procedure described in Section \ref{sec:algorithm} to compute the indicator functions of
$Y\subset X$ where $Y$ and $X$ are sets of points. The affine and the projective cases are considered. Notice that all
computations are with respect to the tdeg term ordering of var. This can be changed by the user.

\begin{enumerate}
\item  \emph{Affine case.} In input the procedure requires:\\
\begin{tabular}{rcl}
        GY &=& Gr\"obner basis of $I(Y)$, \\
        GXMinusY &=&  Gr\"obner basis of $I(X \setminus Y)$, \\
        EstX &=& standard basis of $R/I(X)$, \\
        EstY &=& standard basis of $R/I(Y)$, \\
        var &=& list of indeterminates
        \end{tabular}

The Output is the polynomial representation of the indicator function in $R/I_X$.\\
{\scriptsize
 \begin{verbatim}
 G_to_F := proc(GY, GXMinusY, EstX, EstY, var)
 local E, InY, ly, L, flag, tt, F, i, m, l, v, CC, VV;
 with(Groebner);
  v := op(var);
  E :=`minus`(EstX,EstY);
  InY := [seq(leadterm(GY[i],tdeg(v)),i = 1 ..nops(GY))];
  ly := nops(InY);
  L := [];
  for m in E while true do
   flag :=0;
   while flag = 0 do for i to ly do if gcd(InY[i],m) = InY[i]
     then flag := 1; tt := i end if end do end do;
   L := [op(L),GY[tt]*m/leadterm(GY[tt],tdeg(v))]
  end do;
  F := 0;     i := 0;
  for l in L while true do i := i+1; F := F+a[i]*l end do;
  CC := coeffs(normalf(1-F,GXMinusY,tdeg(v)),[v]);
  VV := solve(\{CC\});
  expand(subs(VV,F+1))
 end proc;
 \end{verbatim}}

 \item \emph{Projective case.} In input the procedure requires:\\
\begin{tabular}{rcl}
        GY &=& Gr\"obner basis of $I(Y)$, which is an homogeneous ideal, \\
        GXMinusY &=&  Gr\"obner basis of $I(X \setminus Y)$,  which is an homogeneous ideal, \\
        EstX &=& standard basis of degree $t$ of $R/I(X)$\\
        EstY &=& standard basis of degree $t$ of $R/I(Y)$, \\
        var &=& list of indeterminates,
        \end{tabular}
\\
\noindent here $t$ is the minimal degree for which $(R/I_X)_t$ as a vector space has the same dimension as the number of
points in $X$.

The Output is the numerator of the ratio of polynomials giving the separator function of $Y$ in $X$. Its denominator is
the $t$-power of the sum of the variables.

 {\scriptsize
\begin{verbatim}
 G_to_F_homo:=proc(GY,GXMinusY,EstX,EstY,var,t)
 local E,InY,ly,L,flag,tt,F,i,m,l,v,CC,VV,S:
 with(Groebner):
  v:=op(var):
  E:=EstX minus EstY:
  InY:=[seq(leadterm(GY[i],tdeg(v)), i=1..nops(GY))]:
  ly:=nops(InY):
  L:=[]:
  for m in E do
   flag:=0:
   while(flag = 0) do
     for i from 1 to ly do if(gcd(InY[i],m)=InY[i])then flag:=1: tt:=i: fi: end:
   end:
   L:=[op(L),GY[tt]*m/leadterm(GY[tt],tdeg(v))]:
  end:
  F:=0: i:=0:
  for l in L do  i:=i+1: F:=F+a[i]*l:  end:
  S:=sum('v[k]',k=1..nops(var)):
  CC:=coeffs(normalf(S^t-F, GXMinusY, tdeg(v)),[v]):
  VV:=solve(\{CC\}):
  expand(subs(VV,S^t-F)):
end proc;
\end{verbatim}}
\end{enumerate}

\end{enumerate}

\end{document}